\newtheorem{theorem}{Theorem}
\newtheorem{lemma}[theorem]{Lemma}
\newtheorem{question}{Question}
\theoremstyle{definition}
\newtheorem{definition}{Definition}
\newcommand*{\bigO}{\mathop{\mathrm{O}}}
\newcommand*{\mul}{\mathop{\mathsf{M}}}
\newcommand*{\bideg}{\mathop{\mathrm{bideg}}}
\newcommand*{\extract}{\mathop{\mathcal{F}}\nolimits}
\newcommand*{\eval}[1]{\left.{#1}\right\rvert}
\newcommand*{\Comp}{\mathop{\mathsf{Comp}}}
\newcommand*{\PowProj}{\mathop{\mathsf{PowProj}}}
\newcommand*{\ring}{\mathbb{A}}
\title{Power Series Composition in Near-Linear Time}
\author{
    Yasunori Kinoshita
    \thanks{Tokyo Institute of Technology, Japan.} \and
    Baitian Li
    \thanks{Institute for Interdisciplinary Information Sciences, Tsinghua University.
    Email: \texttt{lbt21@mails.tsinghua.edu.cn}.}
}
\date{}
\begin{document}
\maketitle
\begin{abstract}
    We present an algebraic algorithm that computes the composition of two power series in
    softly linear time complexity. The previous best algorithms
    are $\bigO(n^{1+o(1)})$ non-algebraic algorithm by Kedlaya and Umans (FOCS 2008) and an $\bigO(n^{1.43})$ algebraic algorithm
    by Neiger, Salvy, Schost and Villard (JACM 2023).

    Our algorithm builds upon the recent Graeffe iteration
    approach to manipulate rational power series introduced by Bostan and Mori (SOSA 2021).
\end{abstract}

\section{Introduction}

Let $\ring$ be a commutative ring and let $f(x), g(x)$ be polynomials in $\ring[x]$ of degrees less
than $m$ and $n$, respectively. The problem of \emph{power series composition} is to compute
the coefficients of $f(g(x)) \bmod x^n$.
The terminology stems from the idea that $g(x)$ can be seen as a truncated formal power series.
This is a fundamental problem in computer algebra~\cite[Section ~4.7]{Knuth98TAOCP2},
and has applications in various areas such as combinatorics \cite{PSS12Comb}
and cryptography~\cite{BMS08Isog}.

A textbook algorithm for power series composition is due to Brent and Kung \cite{brent1978fast},
which computes the composition in $\bigO(\mul(n) (n \log n)^{1/2})$ time complexity.
Here $\mul(n)$ denotes the complexity of computing the product of two polynomials of degree less than $n$.

The current best known algorithm for power series composition is due to Kedlaya and Umans \cite{KU08PolyComp, KU11PolyFact},
which computes the composition in $(n \log q)^{1 + o(1)}$ bit operations, where $\ring$ is the finite field $\mathbb{F}_q$.
Van der Hoeven and Lecerf \cite{vdHL20Revisit} gave a detailed analysis of the subpolynomial term,
and showed that the algorithm has a time complexity of $\tilde \bigO(n 2^{\bigO(\sqrt{\log n \log \log n})} \log q)$.

In this paper, we present a simple algorithm that reduces the complexity of power series composition to near-linear time.
Furthermore, our algorithm works over arbitrary commutative rings.

\begin{theorem} \label{thm:main}
    Given polynomials $f(x), g(x) \in \ring[x]$ with degree less than $m$ and $n$, respectively,
    the power series composition
    $f(g(x)) \bmod x^n$ can be computed in $\bigO(\mul(n)\log m + \mul(m))$ arithmetic operations.
\end{theorem}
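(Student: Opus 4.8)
The plan is to write the composition as a single coefficient of a bivariate rational function and to extract it with the Graeffe (Bostan--Mori) iteration, which shrinks the problem by a constant factor in each of $\bigO(\log m)$ rounds.

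\textbf{Reductions.} If $g(0)=c\neq 0$, first replace $f(y)$ by the Taylor shift $f(y+c)$ and $g(x)$ by $g(x)-c$: this shift costs $\bigO(\mul(m))$ and uses only integer binomial coefficients, hence is valid over an arbitrary commutative ring. Now $g(0)=0$, so $g(x)^i$ has $x$-valuation at least $i$; in particular the coefficients $f_i$ with $i\ge n$ do not influence $f(g(x))\bmod x^n$, so we may also assume $m\le n$, after which the target reads $\bigO(\mul(n)\log m)$ on top of the $\bigO(\mul(m))$ already spent.

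\textbf{A bivariate rational function.} Let $\tilde f(y)=y^{m-1}f(1/y)$ be the reversal of $f$. Since $\frac{1}{1-y\,g(x)}=\sum_{i\ge 0}g(x)^i y^i$ as a power series in $y$, one has
\[
f(g(x)) \;=\; [y^{m-1}]\,\frac{\tilde f(y)}{1-y\,g(x)} ,
\]
so, working throughout modulo $x^n$, the task is to extract one $y$-coefficient of a rational function in $y$ whose denominator $1-y\,g(x)$ has $y$-degree $1$ and whose coefficients lie in $\ring[x]/(x^n)$. (A transposable variant: pass to the $\PowProj$ problem $p_i=\sum_j\ell_j[x^j]g(x)^i$, with the dual identity $\sum_i p_i y^i=[x^{n-1}]\,L(x)/(1-y\,g(x))$ where $L(x)=\sum_{j<n}\ell_j x^{n-1-j}$; Tellegen's transposition principle then returns composition at the same arithmetic cost up to $\bigO(m+n)$. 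I would use whichever is handier.)

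\textbf{Graeffe iteration.} Apply the Bostan--Mori step in the variable $y$: replace $N(y)/Q(y)$ by $N(y)\,Q(-y)\big/\big(Q(y)Q(-y)\big)$; the denominator $Q(y)Q(-y)$ is even in $y$, so splitting the numerator into even and odd parts reduces the extraction of $[y^{k}]$ to extracting $[y^{\lfloor k/2\rfloor}]$ of one of the two new fractions in the variable $y^2$. With $Q(y)=1-y\,g(x)$ one gets $Q(y)Q(-y)=1-y^2 g(x)^2$, so the denominator stays of the form $1-y\,g(x)^{2^k}$, where $g(x)^{2^k}\bmod x^n$ is obtained by $k$ successive squarings; and the target index $m-1$ is halved each round, reaching $0$ after $\bigO(\log m)$ rounds, at which point the answer is read off directly. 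This fixes the $\bigO(\log m)$-round structure and the $\bigO(\log m)$ squarings modulo $x^n$.

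\textbf{Main obstacle.} The crux is the cost of a single round. The numerator is a polynomial in $y$ with coefficients in $\ring[x]/(x^n)$, which a priori has size $\bigO(mn)$, and the products ``$g(x)^{2^k}$ times a coefficient'' look like they cost $\bigO(\mul(mn))$ in total over all rounds --- far too slow. The escape must come from $g(0)=0$: because $g(x)^{2^k}$ has $x$-valuation at least $2^k$, and the coefficient of $x^j$ in $1/(1-y\,g(x))$ has $y$-degree at most $j$, the numerator coefficients that actually survive at round $k$ are few and are needed only to a reduced precision in $x$; if this is organized correctly the numerator one has to carry shrinks geometrically and each round costs only $\bigO(\mul(n))$, for a total of $\bigO(\mul(n)\log m)$. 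Making this truncation bookkeeping precise --- identifying which coefficients, at what $x$-precision, are needed at each level, and checking that the resulting per-round costs sum as claimed --- is the heart of the proof. Combined with the $\bigO(\mul(m))$ Taylor shift (and, in the transposed version, the $\bigO(m+n)$ transposition overhead), this yields $\bigO(\mul(n)\log m+\mul(m))$.
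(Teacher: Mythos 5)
Your setup (rewriting $f(g(x))\bmod x^n$ as the coefficient extraction $[y^{m-1}]\,\tilde f(y)/(1-y\,g(x))$, plus the transposition remark) matches the paper exactly. But your Graeffe step is applied in the \emph{wrong variable}, and the resulting gap is exactly the part you flag as ``the heart of the proof'' and leave open. Iterating in $y$ keeps the denominator in the form $1-y\,g(x)^{2^k}$ and halves the target index $m-1$, but the numerator you must carry is a polynomial of $y$-degree about $m/2^k$ with coefficients in $\ring[x]/(x^n)$, so already the first couple of rounds cost $\Theta(\mul(nm))$ (or $\Theta(nm)$ even for the very first, structured round), and the valuation argument does not rescue this: even after discarding everything the truncation permits, the surviving data at round $k$ has size about $\sum_{j}\max(0,\,n-2^k(m_k-j))=\Theta(nm/2^k)$, which is $\Theta(nm)$ at $k=0$. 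Keeping the numerator coefficients implicitly as $\ring$-linear combinations of $1,g,\dots,g^{2^k-1}$ avoids the blowup but merely defers the original composition problem. So the bookkeeping you hope for does not close the gap.

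The paper's resolution is a genuinely different mechanism: the Graeffe step is applied in $x$, multiplying numerator and denominator by $Q(-x,y)$ so that $Q(x,y)Q(-x,y)=V(x^2,y)$ is even in $x$. Each round \emph{halves the $x$-degree and doubles the $y$-degree}, so by Kronecker substitution the bivariate products at round $k$ have size $\lceil n/2^k\rceil\cdot\min(2^k,m)$ --- this stays at $O(n)$ for the first $\log_2 m$ rounds and then decays geometrically, giving $O(\mul(n)\log m)$ in total, with a final univariate reciprocal $\bmod\ y^m$ costing $O(\mul(m))$. (Two smaller points: the initial Taylor shift is unnecessary --- the paper's recursion needs only $[x^0y^0]Q=1$, which holds regardless of $g(0)$ --- and over an arbitrary commutative ring the $O(\mul(m))$ bound you quote for the shift is itself not immediate, since the standard convolution method divides by factorials.)
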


We remark that our algorithm also has consequences for other computation models measured by bit complexity,
such as boolean circuits and multitape Turing machines.

\paragraph{Technical Overview.}

One of our key ingredients is Graeffe's root squaring method, which was first time introduced by Sch\"onhage \cite{Sch00Recip}
in a purely algebraic setting to compute reciprocals of power series. Recently, Bostan and Mori \cite{bostan2021simple} applied
the Graeffe iteration to compute the coefficient of $x^N$ of the series expansion of a rational power series
$P(x)/Q(x)$, achieving time complexity $\bigO(\mul(n) \log N)$ where $n$ is the degree bound of $P(x)$ and $Q(x)$.

In a nutshell, their algorithm works as follows. Considering multiplying $Q(-x)$ on both the numerator and the denominator,
one obtains $[x^N]\footnote{We use $[x^N]$ to denote coefficient extraction, i.e., for a power series $F(x)$, $[x^N]F(x)$ is the coefficient of $x^N$ in $F(x)$.} \frac{P(x)}{Q(x)} = [x^N] \frac{P(x)Q(-x)}{Q(x)Q(-x)}$. By the symmetry of $Q(x)Q(-x)$, one can
rewrite the expression to $[x^N] \frac{U(x)}{V(x^2)}$ where $U(x) = P(x)Q(-x)$ and $V(x^2) = Q(x)Q(-x)$.
Furthermore, one can split $U(x)$ into even and odd parts as $U(x) = U_e(x^2) + xU_o(x^2)$, and
reduce the problem into $[x^{N/2}] \frac{U_e(x)}{V(x)}$ or $[x^{(N-1)/2}] \frac{U_o(x)}{V(x)}$ depending on the parity of $N$.
Since the iteration reduces $N$ by half each time, the algorithm has time complexity $\bigO(\mul(n) \log N)$
by using polynomial multiplication.

Despite their algorithm has the same time complexity as the classical algorithm due to Fiduccia~\cite{Fid85Rec},
their algorithm is based on a totally different idea, and it is worth to get a closer look at their algorithm
when $N \leq n$. In that case, the terms greater than $N$ are irrelevant, hence the size of the problem can be reduced to $N$,
compared with $n$. Since the size of the problem is reduced by half in each iteration,
the total time complexity is $\bigO(\mul(N))$.

We extend the Graeffe iteration approach to a special kind of bivariate
rational power series $P(x) / Q(x, y)$, where $Q(x, y) = 1/(1 - y g(x))$.
The goal is still to compute the coefficient of $x^n$ of the series expansion of $P(x) / Q(x, y)$,
but in this case, the output is a polynomial in $y$, instead of a single $\ring$-coefficient.

At the beginning of our algorithm, both $P$ and $Q$ have a degree of $n$ with respect to $x$, and a degree at most $1$ with respect to $y$.
In each iteration, we halve the degree with respect to $x$, and double the degree with respect to $y$,
therefore the size of the problem stays at $n$. Since there are $\log n$ iterations, the total time
complexity is $\bigO(\mul(n) \log n)$.

We observe that the above algorithm, actually already solved the power projection problem,
which is the transposed problem of power series composition.
By the transposition principle --- a general method for transforming algebraic algorithms into algorithms for the transposed problem
with nearly identical complexity, we conclude the algorithm for power series composition.
For simplicity, we unpack the transposition principle and give an independent
explanation of the algorithm for power series composition in our presentation.

We first discuss the details of the Graeffe iteration algorithm for power projection in Section \ref{sec:PowProj},
as well as the analysis of correctness and running time. Then we explicitly present the algorithm for the power
series composition in Section \ref{sec:Comp}.

\subsection{Related Work}

\paragraph{Special Cases.}
Many previous works have focused on special cases where $f$ and/or $g$ has a specific form.
Kung~\cite{kung1974computing} applied Newton iteration to compute the reciprocal of a power series,
i.e., $f = (1+x)^{-1}$, in $\bigO(\mul(n))$ operations,
Brent \cite{Brent76Ele} further extended the algorithm for $f$ being elementary functions.
Extensive work has been done to further optimize the constant factor of the algorithms
computing reciprocals \cite{SGV94FastAlg,Ber04Newton,Sch00Recip,Har11Faster,Ser10SOTA},
square roots \cite{Ber04Newton,Har11Faster,Ser10SOTA},
exponentials \cite{Ber04Newton,vdH10NewtonFFT,BS09ExpFast,Har09FasterExp,Ser10SOTA,Ser12note},
and other elementary functions.

Gerhard \cite{Gerhard00CompExp} gave an $\bigO(\mul(n)\log n)$
algorithm for the case when $g(x) = e^x-1$ and $g(x) = \log(1+x)$,
which can be interpreted as the conversion between the falling factorial basis and the monomial basis
of polynomials.
Brent and Kung \cite{brent1978fast} showed that when $g(x)$ is a constant degree polynomial, the composition can be computed
in $\bigO(\mul(n)\log n)$ time. This result is generalized by van der Hoeven~\cite{vdH02Relax} to the case where $g(x)$
is an algebraic power series.
Bostan, Salvy and Schost~\cite{BSS08Comp} isolated a large class of $g$ such that the composition problem
can be computed in $\bigO(\mul(n) \log n)$ or $\bigO(\mul(n))$ operations, by composing several
basic subroutines and the usage of the transposition principle \cite{BLG03Tellegen}.

Bernstein \cite{Bernstein98SmallChar} studied the composition of a power series in the case where $\ring$ has a small characteristic.
When $\ring$ has positive characteristic $q$,
his algorithm computes the composition of a power series in $\bigO((q/\log q)\mul(n) \log n)$ operations. Ritzmann \cite{Ritzmann1986Numeric} studied the numerical algorithm of power series composition, where $\ring$ is the field of complex numbers.

\paragraph{Modular Composition.}
Besides Brent and Kung's $\bigO(\mul(n) (n\log n)^{1/2})$-time algorithm, all known approaches
toward the general power series composition, actually solve a generalized problem called the modular composition problem.
This problem is, given three polynomials $f(x), g(x), h(x) \in \ring[x]$ where $h$ is monic,
to compute $f(g) \bmod h$
where the $\bmod$ operation takes the remainder of the Euclidean division. When restricted to $h(x) = x^n$,
the modular composition problem becomes the power series composition problem.

In Brent and Kung's paper \cite{brent1978fast} on power series composition, they also presented an algorithm that computes
modular composition in $\bigO(n^{(\omega+1)/2} + \mul(n) (n \log n)^{1/2})$ operations,
where $\omega$ is a feasible exponent of matrix multiplication. Huang and Pan \cite{HP98RectMult} improved the bound to
$\bigO(n^{\omega_2/ 2} + \mul(n) (n \log n)^{1/2})$ by using the fast rectangular matrix multiplication,
where $\omega_2 \leq \omega+1$ is a feasible exponent of rectangular matrix multiplication of size $n \times n^2 \times n$.
Even assuming one can attain the obvious lower bounds $\omega \geq 2$ and $\omega_2 \geq 3$,
their algorithms cannot yield a bound better than $\bigO(n^{1.5})$.

Based on this, B\"urgisser, Clausen and Shokrollahi \cite[open problem 2.4]{BCS10AlgComp} 
and von zur Gathen and Gerhard \cite[research problem 12.19]{ModernComputerAlgebra} asked the following question:

\begin{question}\label{question:modcomp}
    Is there an algorithm that computes modular composition better than Brent and Kung's
    approach, or even better than $\bigO(n^{1.5})$?
\end{question}

Their question is partially answered by Kedlaya and Umans,
and Neiger, Salvy, Schost and Villard.

Kedlaya and Umans \cite{KU08PolyComp, KU11PolyFact} presented two algorithms for computing modular composition, where $\ring$ is the finite field $\mathbb{F}_q$. The first algorithm performs $(n \log q)^{1 + o(1)}$ bit operations, while the second one performs $p^{1+o(1)} n^{1+o(1)}$ algebraic operations, where $p$ is the characteristic of the field. However, the first algorithm involves non-algebraic operations, such as lifting from the prime field to the integers, and the second algorithm is efficient only for fields with small characteristic. As a result, their approach is unlikely to extend to the case of general rings.

A recent breakthrough by Neiger, Salvy, Schost and Villard \cite{NSSV23PolyComp} presented a Las Vegas algorithm
that computes modular composition in $\tilde\bigO(n^\kappa)$\footnote{We use $\tilde \bigO$ to suppress log factors, i.e., $T(n) = \tilde\bigO(f(n))$ denotes $T(n) = \bigO(f(n) \log^k f(n))$ for some $k$.} arithmetic operations, assuming $\ring$ is a field
and some mild technical assumptions. Here $\kappa$ is determined by
\[ \kappa = 1 + \frac 1{\frac 1{\omega - 1} + \frac{2}{\omega_2 - 2}}. \]
By the best known upper bound $\omega \leq 2.371552$ and $\omega_2 \leq 3.250385$ given by Williams,
Xu, Xu and Zhou \cite{WXXZ24MatrixMult}, it follows that $\kappa \leq 1.43$, which breaks the $n^{1.5}$ barrier.
However, by the obvious lower bound $\omega \geq 2$ and $\omega_2 \geq 3$, their algorithm cannot yield a bound better than
$\bigO(n^{4/3})$.

Our results can be seen as another partial answer to Question \ref{question:modcomp}.

\section{Preliminaries}

\subsection{Notation}

In this paper, $\ring$ denotes an effective ring, i.e., a commutative ring with unity,
and one can perform the basic ring operations $(+, -, \times)$ with unit cost in $\ring$.
The polynomial ring and the formal power series ring over $\ring$ are denoted by $\ring[x]$ and $\ring[[x]]$, respectively.
We use $\ring[x]_{<n}$ to denote the set of polynomials of degree less than $n$.
For any polynomial or power series $f(x) = \sum_i f_i x^i$, $[x^i]f(x)$ denotes the coefficient $f_i$ of $x^i$,
and $f(x) \bmod x^n$ denotes the truncation $\sum_{i=0}^{n-1} f_i x^i$.

For a bivariate polynomial $f(x,y) \in \ring[x,y]$, $\deg_x f(x, y)$ and $\deg_y f(x, y)$ denote its degree with
respect to $x$ and $y$, respectively.
The bidegree of $f(x,y)$ is the pair $\bideg f(x, y) = (\deg_x f(x, y), \deg_y f(x, y))$.
Inequalities between bidegrees are component-wise. For example, $\bideg f(x,y) \leq (n,m)$
means $\deg_x f(x,y) \leq n$ and $\deg_y f(x,y) \leq m$.

\subsection{Computation Model}

Informally, we measure the complexity of an algorithm by counting the number of arithmetic operations in the ring $\ring$.
The underlying computation model is the straight-line program. Readers are referred to
the textbook of B\"urgisser, Clausen and Shokrollahi
\cite[Sec.~4.1]{BCS10AlgComp} for a detailed explanation.

\subsection{Polynomial Multiplication}

Our algorithm calls polynomial multiplication as a crucial subroutine.
Over different rings, the time complexity of our algorithm may vary due
to the different polynomial multiplication algorithms.

For an effective ring $\ring$, we call a function $\mul_\ring(n)$ a \emph{multiplication time} for $\ring[x]$
if polynomials in $\ring[x]$ of degree less than $n$ can be multiplied in $\bigO(\mul_\ring(n))$ operations.
When the ring $\ring$ is clear from the context, we simply write $\mul(n)$.

Throughout this paper, we assume a regularity condition on $\mul(n)$,
that for some constants $\alpha, \beta \in (0,1)$, the function $\mul$ satisfies the condition
$\mul(\lceil \alpha n \rceil) \leq \beta \mul(n)$ for all sufficiently large $n$.
All the examples of $\mul$ presented in this paper satisfy this condition.

The fast Fourier transform (FFT) \cite{CT65FFT} gives the bound
when $\ring$ is a field possessing roots of unity of sufficiently high order;
for example, the complex numbers $\mathbb{C}$ satisfies $\mul_{\mathbb C}(n) = \bigO(n\log n)$.
For any $\mathbb F_p$-algebra $\ring$ where $p$ is a prime, Harvey and van der Hoeven \cite{HvdH19Uncond}
proved that $\mul_\ring(n) = \bigO(n \log n \, 4^{\log ^* n})$\footnote{$\log ^*$ is the iterated logarithm.}.
For general rings, Cantor and Kaltofen \cite{cantor1987fast} showed a uniform bound $\mul_\ring(n) = \bigO(n\log n \log \log n)$.
This implies that our algorithm has a uniform bound $\bigO(n\log n \log m \log \log n + m\log m\log \log m)$.

We may also obtain complexity bounds for various computation models,
depending on the fast arithmetic of $\ring$ and the polynomial multiplication algorithm on the corresponding model.
For example, for the bit complexity of boolean circuits and multitape Turing machines,
since the polynomial multiplication over $\mathbb F_p$ can be done in
$\mul_{\mathbb F_p}(n) = \bigO((n\log p)\log (n\log p) 4^{\max(\log^* p, \log^* n)})$ bit operations \cite{HvdH19Uncond},
substituting such a bound into our algorithm yields a bound for the bit complexity of the composition problem.

\subsection{Operations on Power Series and Polynomials}

We need the following operations on power series and polynomials that are used in our algorithm.

\begin{theorem}[{\cite{kung1974computing}}] \label{thm:recip}
    Let $\ring$ be an effective ring and
    $f(x) \in \ring[x]_{<n}$ satisfies $f(0) = 1$, the truncation of
    the reciprocal $f(x)^{-1} \bmod x^n$ can be computed in $\bigO(\mul_\ring(n))$ operations.
\end{theorem}

Using Kronecker's substitution, one can reduce the problem of computing the multiplication of bivariate
polynomials to univariate cases.
\begin{theorem}[{\cite[Corollary 8.28]{ModernComputerAlgebra}}] \label{thm:kronecker}
    Let $f(x,y), g(x,y) \in \ring[x,y]$ with bidegree bound $(n,m)$,
    their product $f(x,y)g(x,y)$ can be computed in $\bigO(\mul_\ring(nm))$ operations.
\end{theorem}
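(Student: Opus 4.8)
The plan is to reduce the bivariate product to a single univariate product via \emph{Kronecker's substitution}, paying only a constant-factor blow-up in degree. Write the hypothesis as $\bideg f \le (n,m)$ and $\bideg g \le (n,m)$, so that the product $h(x,y) := f(x,y)g(x,y)$ satisfies $\deg_x h \le 2n$ and $\deg_y h \le 2m$. Set $N := 2n + 1$ and consider the ring homomorphism $\phi \colon \ring[x,y] \to \ring[x]$ determined by $x \mapsto x$ and $y \mapsto x^{N}$. Writing $h = \sum_{j} h_j(x)\, y^j$ with each $\deg h_j \le 2n < N$, we get $\phi(h) = \sum_j h_j(x)\, x^{Nj}$, and the supports of the summands lie in the pairwise-disjoint intervals $[\,Nj,\ Nj + N - 1\,]$. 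Hence $\phi$ is injective on the set of polynomials of $x$-degree $< N$, and from the coefficient vector of $\phi(h)$ one reads off each block $h_j$, and thus recovers $h$.

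First I would form $\hat f := \phi(f)$ and $\hat g := \phi(g)$, which is purely a rearrangement of input coefficients (the coefficient of $x^i y^j$ is placed at index $i + Nj$, with zeros elsewhere) and costs no ring operations. Next I would compute the univariate product $\hat f \hat g$; since $\phi$ is a ring homomorphism, $\hat f \hat g = \phi(h)$, so $h = fg$ is obtained by the block reading above, again without ring operations, and correctness is immediate. For the running time: both $\hat f$ and $\hat g$ have degree at most $n + Nm = \bigO(nm)$, hence so does $\hat f\hat g$, and the univariate multiplication costs $\bigO(\mul_\ring(\bigO(nm)))$. This is $\bigO(\mul_\ring(nm))$ because any multiplication time absorbs a constant-factor growth of its argument: split each operand into $\bigO(1)$ blocks of degree $< nm$, expand the product into $\bigO(1)$ length-$nm$ multiplications plus $\bigO(nm)$ coefficient additions, and use $\mul_\ring(nm) = \Omega(nm)$. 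Summing, the total is $\bigO(\mul_\ring(nm))$.

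The one genuinely delicate point is the \emph{separation} step: the substitution exponent $N$ must strictly exceed the $x$-degree of the \emph{product} (not merely of the factors), for otherwise the blocks $h_j(x)x^{Nj}$ overlap and decoding fails; the choice $N = 2n+1$ ensures this, and symmetrically one could substitute $x \mapsto y^{2m+1}$ instead. The remaining care is bookkeeping: fixing the bidegree convention ($\le$ versus $<$) so the constant in $N$ is right, and noting that the degenerate cases $n = 0$ or $m = 0$ collapse to a univariate multiplication or to scalar multiplication and are handled directly.
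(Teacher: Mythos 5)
Your proof is correct and is exactly the argument behind the cited result: the paper introduces this theorem with the words ``Using Kronecker's substitution,'' and your reduction $y \mapsto x^{2n+1}$, with the observation that the substitution exponent must exceed the $x$-degree of the \emph{product}, is the standard proof of \cite[Corollary 8.28]{ModernComputerAlgebra}. The constant-factor absorption $\mul_\ring(\bigO(nm)) = \bigO(\mul_\ring(nm))$ via block splitting is also handled correctly.
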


We also require the following lemma, which bounds the total time complexity of polynomial multiplications with exponentially decreasing degrees.
\begin{lemma}[{\cite[Lemma 1.1]{brent1978fast}}] \label{lem:geomseries}
	$\sum_{i=0}^{\lfloor \log_2 n \rfloor} \mul(\lceil n / 2^i \rceil) = \bigO(\mul(n))$.
\end{lemma}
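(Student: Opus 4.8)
The plan is to use the regularity condition on $\mul$ to show that the summands decay geometrically, so that the sum is dominated by its first term $\mul(n)$ up to a constant factor. I will assume throughout that $\mul$ is non-decreasing, which holds for every multiplication time considered in the paper and can always be arranged by passing to the running maximum $n\mapsto\max_{i\le n}\mul(i)$. Fix constants $\alpha,\beta\in(0,1)$ and a threshold $N_0$ with $\mul(\lceil\alpha m\rceil)\le\beta\,\mul(m)$ for all $m\ge N_0$, and set $k:=\lceil\log_2(1/\alpha)\rceil$, so that $2^{-k}\le\alpha$ and hence, by monotonicity of ceilings and of $\mul$, $\mul(\lceil m/2^k\rceil)\le\mul(\lceil\alpha m\rceil)\le\beta\,\mul(m)$ whenever $m\ge N_0$.

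The key step is the pointwise estimate: if $\lceil n/2^i\rceil\ge N_0$, then $\mul(\lceil n/2^i\rceil)\le\beta^{\lfloor i/k\rfloor}\,\mul(n)$. To prove this I would write $i=qk+r$ with $0\le r<k$, reduce via monotonicity of $\mul$ to bounding $\mul(\lceil n/2^{qk}\rceil)$, and set $m_j:=\lceil n/2^{jk}\rceil$. The nested-ceiling identity $\lceil\lceil x\rceil/b\rceil=\lceil x/b\rceil$ for positive integers $b$ gives $m_{j+1}=\lceil m_j/2^k\rceil$; since $(m_j)$ is non-increasing and $m_q\ge\lceil n/2^i\rceil\ge N_0$, every $m_j$ with $j<q$ is at least $N_0$, so the displayed contraction applies $q$ times and yields $\mul(m_q)\le\beta^q\mul(m_0)=\beta^q\mul(n)$.

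Finally I would split the sum according to whether $\lceil n/2^i\rceil\ge N_0$ or $\lceil n/2^i\rceil<N_0$. The ``large'' part is bounded by $\mul(n)\sum_{i\ge 0}\beta^{\lfloor i/k\rfloor}=\mul(n)\sum_{q\ge 0}k\beta^q=\frac{k}{1-\beta}\mul(n)=\bigO(\mul(n))$. For the ``small'' part, $\lceil n/2^i\rceil<N_0$ forces $2^i>n/N_0$, i.e. $i>\log_2 n-\log_2 N_0$; together with $i\le\lfloor\log_2 n\rfloor$ this leaves only $\bigO(1)$ indices, each contributing at most $\mul(N_0)=\bigO(1)$. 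Since $\mul(n)\ge 1$, the two contributions combine to $\bigO(\mul(n))$.

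I expect the only delicate points to be bookkeeping ones: juggling the ceilings (handled by the nested-ceiling identity) and respecting the ``for all sufficiently large $n$'' qualifier in the regularity hypothesis (handled by the observation that only $\bigO(1)$ of the $\bigO(\log n)$ summands can fall below $N_0$). The special case $\alpha\ge 1/2$, where $k=1$ and the estimate is just the textbook geometric series, already carries the whole idea; the blocking into runs of length $k$ is merely a device to reduce general $\alpha$ to it.
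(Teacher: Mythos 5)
Your proof is correct. Note, though, that the paper itself offers no argument for this lemma: it is imported verbatim as Lemma~1.1 of \cite{brent1978fast}, where it is proved under a superlinearity hypothesis of the form $2\mul(n)\le\mul(2n)$ (giving $\mul(\lceil n/2^i\rceil)\lesssim\mul(n)/2^i$ and an immediate geometric series). You instead derive the bound from the regularity condition the paper actually states, $\mul(\lceil\alpha n\rceil)\le\beta\mul(n)$ with $\alpha,\beta\in(0,1)$, which is the logically relevant hypothesis here; your blocking of the indices into runs of length $k=\lceil\log_2(1/\alpha)\rceil$ correctly reduces the general case to the $\alpha\ge 1/2$ one, the nested-ceiling identity keeps the iterates honest, and the split into ``large'' and ``small'' indices properly respects the ``for all sufficiently large $n$'' qualifier. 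The one ingredient you use that is not among the paper's stated hypotheses is monotonicity of $\mul$; this is genuinely needed for $\alpha\ne 1/2$ (the regularity condition alone only controls $\mul$ along the orbit of $m\mapsto\lceil\alpha m\rceil$), and your running-maximum remedy does work, but it deserves a sentence of justification --- one must check that $n\mapsto\max_{i\le n}\mul(i)$ still satisfies the regularity condition, which follows because every integer in $[\lceil\alpha N_0\rceil,\lceil\alpha n\rceil]$ is of the form $\lceil\alpha m\rceil$ for some $m\le n$ with $m\ge N_0$. So: correct, and in fact more self-contained than the paper, at the cost of one standard auxiliary assumption that should be stated rather than waved at.
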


\subsection{Transposition Principle}

The transposition principle, a.k.a.~Tellegen's principle, is an algorithmic theorem
that allows one to produce a new algorithm of the \emph{transposed problem} of a given linear problem.

\begin{theorem}[{\cite[Thm.~13.20]{BCS10AlgComp}}]
    Let $M$ be an $r \times s$ matrix. Suppose that there exists a linear
    straight-line program of length $L$ that computes the matrix-vector product $b \mapsto M b$,
    and $M$ has $z_0$ zero rows and $z_1$ zero columns. Then the transposed problem
    $c\mapsto M^{\mathsf T}c$ can be solved by a linear straight-line program of length
    $L - s + r - z_1 + z_0$.
\end{theorem}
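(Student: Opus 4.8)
The plan is to prove the transposition principle by the standard computation-graph (linear network) argument, which is really just a bookkeeping device for the identity that a linear straight-line program is a product of elementary matrices $M = \Pi E_L \cdots E_1 \iota$, so that $M^{\mathsf T} = \iota^{\mathsf T} E_1^{\mathsf T} \cdots E_L^{\mathsf T} \Pi^{\mathsf T}$. Concretely, I would first represent the given length-$L$ program for $b \mapsto Mb$ as a weighted directed acyclic graph $G$: one vertex per variable, with the $s$ input variables as sources; an instruction $v_k \leftarrow \lambda v_i + \mu v_j$ contributes edges $i \to k$ and $j \to k$ carrying the scalars $\lambda$ and $\mu$ (a scalar multiplication being the unary special case). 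Reading off path weights, the value held by vertex $v_k$ is $\sum_P \bigl(\prod_{e \in P} w_e\bigr)\, b_{\mathrm{src}(P)}$, summed over directed source-to-$v_k$ paths $P$; in particular $M_{ij}$ is the total weight of all paths from input $j$ to the vertex carrying output $i$.

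The core step is then that reversing every edge of $G$ (keeping the scalar labels) yields a graph $G^{\mathsf T}$ in which the total weight of paths from the $i$-th output vertex to the $j$-th input vertex equals $M_{ij} = (M^{\mathsf T})_{ji}$; hence, feeding the entries of $c$ at the old output vertices and reading results at the old input vertices, $G^{\mathsf T}$ computes $c \mapsto M^{\mathsf T} c$. What remains is to turn $G^{\mathsf T}$ back into a linear straight-line program and count its length, and here is the only genuinely nontrivial point: reversal turns fan-out into fan-in, so a gate of $G$ (in-degree at most two, but possibly large out-degree) becomes, in $G^{\mathsf T}$, a vertex of large in-degree, which must be realized by an addition chain of length (in-degree $-\,1$), with the edge scalars folded into those same instructions. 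Processing the vertices of $G^{\mathsf T}$ in topological order and summing these costs, the transposed program has length $\sum_{v:\,\deg^+_G(v)\ge 1}(\deg^+_G(v)-1) = |E(G)| - \#\{\text{non-sink vertices of }G\}$; since the $L$ instructions supply the edges and the non-sinks are the $s$ inputs together with the gates that are actually consumed, a direct count collapses this to $L - s + r$. The corrections $-z_1$ and $+z_0$ then account, respectively, for unused inputs (zero columns of $M$, which become zero rows of $M^{\mathsf T}$ needing no instructions) and for identically-zero outputs (zero rows of $M$, which become trivial sources of $G^{\mathsf T}$).

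I expect the main obstacle to be exactly this length accounting rather than the correctness of the reversal. Making the edge/vertex double count match the claimed formula requires pinning down the instruction model precisely — what is charged as one operation, whether scalar multiplications and plain copies are counted separately, and how shared or repeated outputs and output-equals-input cases are handled — and then checking that no instruction is double-counted when a value of $G$ simultaneously feeds several internal consumers and an output. Once the model is fixed, each piece is routine, and the zero-row/zero-column adjustments amount to deleting the isolated vertices from $G$ before reversing. A clean way to organize the write-up is to first establish $L_{\mathrm{transposed}} + s = L + r$ for ``reduced'' programs with no useless gates, no zero rows, and no zero columns, and then recover the general statement by peeling off the $z_0$ and $z_1$ corrections.
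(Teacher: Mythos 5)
The paper itself gives no proof of this statement---it is quoted from B\"urgisser--Clausen--Shokrollahi, Thm.~13.20---and your edge-reversal argument is exactly the standard proof of that theorem: encode the linear straight-line program as an edge-labelled DAG, note that $M_{ij}$ is the total path weight from input $j$ to output $i$, reverse all edges to get $M^{\mathsf T}$, and re-serialize. You have also correctly isolated the re-serialization count as the only delicate step, but two points in your sketch would not survive being made precise. First, ``the $L$ instructions supply the edges'' is off: an instruction $v_k \leftarrow \lambda v_i + \mu v_j$ contributes \emph{two} edges, so $|E(G)|$ can be as large as $2L$ and your expression $|E(G)| - \#\{\text{non-sinks}\}$ does not collapse to $L - s + r$ by a direct count. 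The standard repair is to split $L$ into additions and scalar multiplications: the scalar multiplications correspond to edges labelled by a constant other than $\pm 1$, and their number is invariant under reversal; the additions number $|E| - |V| + s$ forward and $|E| - |V| + r$ backward for a reduced program, whence $L^{\mathsf T} = L - s + r$ in the reduced case, exactly as in the last sentence of your plan.

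Second, and more seriously, the signs of the correction terms. Carrying out your own ``peel off the degenerate rows and columns first'' strategy gives $L^{\mathsf T} = L - (s - z_1) + (r - z_0) = L - s + r + z_1 - z_0$: a zero column of $M$ is an unused input, so the one-addition saving promised for it by the $-s$ term never materializes and must be \emph{added back}, and symmetrically a zero row is a free output whose counterpart saving in $+r$ must be subtracted. Your attribution of $-z_1$ to unused inputs and $+z_0$ to zero outputs inverts both effects. Indeed the formula as quoted fails a sanity check: for $M = \begin{pmatrix} 1 & 0 \end{pmatrix}$ we have $L = L^{\mathsf T} = 0$, $r=1$, $s=2$, $z_0=0$, $z_1=1$, yet $L - s + r - z_1 + z_0 = -2$, which cannot be the length of any program. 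The correct statement (and the one in the cited source) carries the corrections $-\,(\text{zero rows}) + (\text{zero columns}) = -z_0 + z_1$; the version reproduced above appears to have the two subscripts transposed. So your proof, if completed carefully along the lines you describe, will not establish the statement as written---it will establish the corrected one.
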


Roughly speaking, the transposition principle allows one to compute the transposed problem
with the same time complexity as the original problem, provided that the algorithm is $\ring$-linear,
in the sense that only linear operations in the coefficients of $b$ are performed.

The transposed problem of power series composition is the power projection problem, which is defined as follows.

\begin{definition}[Power Projection]
    Given a positive integer $m$, a polynomial $g(x)$ of degree less than $n$, and a linear form
    $w \colon \ring[x]_{< n} \to \ring$, the power projection problem is to compute
    $f_i = w(g(x)^i \bmod x^n)$
    for all $i = 0,\dots,m-1$. The linear form $w$ is identified by its values on the monomials $x^i$,
    i.e., $w(\sum_{i = 0}^{n-1} a_i x^i) = \sum_{i=0}^{n-1} w_i a_i$.
\end{definition}

For a given polynomial $g(x)$, consider the $n\times m$ matrix $M$ given by
$M_{ij} = [x^i] g(x)^j$. The power series composition problem is equivalent to computing
$M f$, where $f$ is the column vector of coefficients of $f(x)$,
and the power projection problem is equivalent to computing $M^{\mathsf T} b$,
where $b$ is the column vector of coefficients of the linear form $w$.
It is important to note that the power projection and composition problems are linear only when the polynomial $g$ is \emph{fixed}, as the output of these problems is nonlinear with respect to the coefficients of $g$.

\section{Main Results}

In Section~\ref{sec:PowProj}, we present an efficient $\ring$-linear algorithm for power projection.
By applying the transposition principle to this algorithm, we obtain an algorithm for power series composition.
In Section~\ref{sec:Comp}, we describe the derivation of the power series composition algorithm without relying on the transposition principle, thereby providing a direct proof of Theorem~\ref{thm:main}.

\subsection{Algorithm for Power Projection} \label{sec:PowProj}

Our objective is to compute $f_i = w(g(x)^i \bmod x^n)$ for all $i=0, \dots, m-1$.
We define a polynomial $w^{\mathrm{R}}$ as $w^{\mathrm{R}}(x) = \sum_{j=0}^{n-1} w_{n-1-j} x^j$.
Let $P(x, y) \coloneqq w^{\mathrm{R}}(x)$ and $Q(x,y) \coloneqq 1 - y g(x)$.
Then, the generating function $f(y) \coloneqq \sum_{i=0}^{m-1} f_i y^i$ is rearranged as:
\begin{equation*}
    \begin{split}
        f(y)
        &= \sum_{i=0}^{m-1} y^i \sum_{j=0}^{n-1} w_j [x^j] g(x)^i \\
        &= \sum_{i=0}^{m-1} y^i [x^{n-1}] ( w^{\mathrm{R}}(x) g(x)^i ) \\
        &= \left(\sum_{i=0}^{\infty} y^i [x^{n-1}] ( w^{\mathrm{R}}(x) g(x)^i )\right) \bmod y^m \\
        &= [x^{n-1}]\left( w^{\mathrm{R}}(x) \sum_{i=0}^{\infty} y^i g(x)^i \right) \bmod y^m \\
        &= [x^{n-1}]\frac{w^{\mathrm{R}}(x)}{1 - y g(x)} \bmod y^m \\
        &= [x^{n-1}]\frac{P(x, y)}{Q(x,y)} \bmod y^m
        \eqqcolon u.
    \end{split}
\end{equation*}
If $n - 1 = 0$, then $u = P(0,y) / Q(0,y)$.
If $n - 1 \geq 1$, we multiply $Q(-x,y)$ by the numerator and the denominator, resulting in
\begin{equation*}
    u = [x^{n-1}]{\frac{P(x,y)Q(-x,y) \bmod x^n \bmod y^m}{Q(x,y)Q(-x,y) \bmod x^n \bmod y^m}} \bmod y^m.
\end{equation*}
Let $A(x,y) \coloneqq Q(x,y)Q(-x,y) \bmod x^n \bmod y^m$.
$A(x,y)$ satisfies the equation $A(x,y) = A(-x,y)$, which implies $[x^i] A(x,y) = 0$ for all odd $i$; thus, there exists a unique polynomial $V(x,y)$ such that $V(x^2,y) = A(x,y)$.
Now, we have
\begin{equation*}
    u = [x^{n-1}]\frac{U(x,y)}{V(x^2,y)} \bmod y^m
\end{equation*}
for $U(x,y) \coloneqq P(x,y)Q(-x,y) \bmod x^n \bmod y^m$.
Since $1 / V(x^2, y)$ is an even formal power series with respect to $x$, i.e., $[x^i](1/V(x,y))=0$ for all odd $i$, we can ignore the odd (or even) part of $U(x,y)$ if $n-1$ is even (or odd).
Let $U_e(x,y)$ and $U_o(x, y)$ be the unique polynomials such that $U(x,y) = U_e(x^2,y) + xU_o(x^2,y)$.
Then, we have
\begin{equation*}
    \begin{split}
        u
        &= \begin{cases}
            \displaystyle [x^{n-1}]\frac{U_e(x^2,y)}{V(x^2,y)} \quad &(n-1 \text{ is even}) \\
            \displaystyle [x^{n-1}]\frac{xU_o(x^2,y)}{V(x^2,y)} \quad &(n-1 \text{ is odd})
        \end{cases} \\
        &= \begin{cases}
            \displaystyle [x^{\lceil n/2 \rceil - 1}]\frac{U_e(x,y)}{V(x,y)} \quad &(n-1 \text{ is even}) \\
            \displaystyle [x^{\lceil n/2 \rceil - 1}]\frac{U_o(x,y)}{V(x,y)} \quad &(n-1 \text{ is odd}).
        \end{cases}
    \end{split}
\end{equation*}
Now the problem is reduced to a case where $n$ is halved.
We repeat this reduction until $n - 1$ becomes $0$.
The resulting algorithm is summarized in Algorithm \ref{alg:PowProj}.

\begin{algorithm}
    \caption{
        ($\PowProj$)\\
        \textbf{Input}: $n$, $m$, $P(x,y), Q(x,y) \in \ring[x,y]$ \\
        \textbf{Output}: $[x^{n-1}] (P(x,y) / Q(x,y)) \bmod y^m$
    }
    \label{alg:PowProj}
    \begin{algorithmic}[1]
        \Require $[x^0 y^0] Q(x,y) = 1$
        \While {$n > 1$}
            \State $U(x,y) \gets P(x,y)Q(-x,y) \bmod x^n \bmod y^m$
            \If {$n-1$ is even}
                \State $P(x,y) \gets \sum_{i=0}^{\lceil n/2 \rceil - 1} x^i [s^{2i}] U(s, y)$
            \Else
                \State $P(x,y) \gets \sum_{i=0}^{\lceil n/2 \rceil - 1} x^i [s^{2i+1}] U(s, y)$
            \EndIf
            \State $A(x,y) \gets Q(x,y)Q(-x,y) \bmod x^n \bmod y^m$
            \State $Q(x,y) \gets \sum_{i=0}^{\lceil n/2 \rceil - 1} x^i [s^{2i}] A(s,y)$
            \State $n \gets \lceil n/2 \rceil$
        \EndWhile
        \State \Return $( P(0, y) / Q(0, y) ) \bmod y^m$ \label{line:PowProjRecip}
    \end{algorithmic}
\end{algorithm}

\begin{theorem}
    Given a positive integer $m$, a polynomial $g(x) \in \ring[x]$ with degree less than $n$, and a linear form $w \colon \ring[x]_{<n} \to \ring$, one can compute $w(g(x)^i \bmod x^n) = \sum_{j=0}^{n-1} w_j [x^j] g(x)^i$ for all $i=0,\dots,m-1$ in $\bigO(\mul(n) \log m + \mul(m))$ operations.
\end{theorem}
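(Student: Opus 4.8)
The plan is in two stages: first reduce the stated task to a single call of $\PowProj$, and then analyze that algorithm. The displayed chain of identities just before Algorithm~\ref{alg:PowProj} shows that the generating function $f(y) = \sum_{i=0}^{m-1} f_i y^i$ of the desired values equals $[x^{n-1}]\bigl(w^{\mathrm R}(x)/(1 - yg(x))\bigr)\bmod y^m$. Since $Q(x,y) = 1 - yg(x)$ satisfies $[x^0y^0]Q = 1$, one may call $\PowProj(n, m, w^{\mathrm R}(x), 1 - yg(x))$, and its output is exactly $f(y)$, from which each $f_i$ is read off. So it remains to prove that $\PowProj$ is correct and runs in $\bigO(\mul(n)\log m + \mul(m))$ operations.

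\emph{Correctness.} I would prove, by induction on the value of the loop variable $n$, that $\PowProj$ returns $u := [x^{n-1}]\bigl(P/Q\bigr)\bmod y^m$ whenever $[x^0y^0]Q = 1$. The base case $n=1$ is line~\ref{line:PowProjRecip}: $Q(0,y)$ is a unit of $\ring[[y]]$, and $[x^0]\bigl(P/Q\bigr) = P(0,y)/Q(0,y)$, so the returned value is $u$. The inductive step is precisely the computation already carried out in the text: one pass of the loop replaces $(n,P,Q)$ by $(\lceil n/2\rceil,P',Q')$ with $[x^{n-1}](P/Q)\bmod y^m = [x^{\lceil n/2\rceil-1}](P'/Q')\bmod y^m$, using that $A(x,y) = Q(x,y)Q(-x,y)\bmod x^n\bmod y^m$ is even in $x$ (so $Q'=V$ with $V(x^2,y)=A(x,y)$ is well defined), that $1/V(x^2,y)$ is even in $x$ (so only the part of $U$ of the correct parity in $x$ contributes), and that truncation modulo $x^n$ and $y^m$ leaves the extracted quantity unchanged. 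One also checks that the hypothesis persists: $[x^0]Q'(x,y) = Q(0,y)^2$, whose $y^0$-coefficient is $1$.

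\emph{Running time.} The analysis hinges on a bidegree invariant, established by the same induction: when the loop variable equals $n_k := \lceil n/2^k\rceil$ one has $\deg_x P, \deg_x Q < n_k$, $\deg_y P \le \min(2^k-1, m-1)$ and $\deg_y Q \le \min(2^k, m-1)$ --- the $x$-degree halves through the parity extraction, while the $y$-degrees add or double under the two products but are capped by reduction modulo $y^m$. Since $n_k = 1$ iff $2^k \ge n$, the loop makes $\lceil\log_2 n\rceil$ passes. Each pass is dominated by the two bivariate multiplications forming $U$ and $A$ (the remaining parity extractions, truncations and sign changes are linear); by Kronecker substitution (Theorem~\ref{thm:kronecker}) and the regularity of $\mul$ each multiplication costs $\bigO\bigl(\mul(n_k\min(2^k,m))\bigr)$. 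After the loop, line~\ref{line:PowProjRecip} inverts a power series of degree $<m$ in $y$ and multiplies, which is $\bigO(\mul(m))$ by Theorem~\ref{thm:recip}.

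\emph{The key summation.} It remains to bound $\sum_{k=0}^{\lceil\log_2 n\rceil-1}\mul(n_k\min(2^k,m))$. If $2^k\le m$ then, the loop still running, $2^k<n$, so $n_k 2^k < n + 2^k < 2n$ and the term is $\bigO(\mul(n))$; there are at most $\min\bigl(\lceil\log_2 n\rceil,\,1+\lfloor\log_2 m\rfloor\bigr) = \bigO(\log m)$ such $k$, contributing $\bigO(\mul(n)\log m)$. If $2^k>m$ (which forces $m<n$), let $k_0$ be the least such index; for $k = k_0 + j$ one has $n_k = \lceil n_{k_0}/2^j\rceil$, and since $n_k \ge 2$ throughout the loop, $n_k m < 2\lceil n_{k_0}m/2^j\rceil$, so by Lemma~\ref{lem:geomseries} these terms sum to $\bigO\bigl(\mul(n_{k_0}m)\bigr)$; and $n_{k_0}m < nm/2^{k_0} + m < 2n$ because $2^{k_0}>m$ and $m<n$, so this is $\bigO(\mul(n))$. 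Together with the final $\bigO(\mul(m))$ this yields $\bigO(\mul(n)\log m + \mul(m))$. I expect this last summation to be the crux: obtaining the $\log m$ rather than a $\log n$ factor forces the case split on whether $2^k$ exceeds $m$ and the slightly delicate ceiling estimate $n_k m = \bigO(\lceil n_{k_0}m/2^j\rceil)$ needed to invoke Lemma~\ref{lem:geomseries}; everything else is routine once the reduction identities preceding the algorithm are in hand.
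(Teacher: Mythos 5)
Your proposal is correct and follows essentially the same route as the paper: reduce to a single call of $\PowProj$ via the generating-function identity, verify the loop invariant by induction, and bound the cost of the $k$-th pass both by $\bigO(\mul(n))$ (using $\bideg \leq (\lceil n/2^k\rceil-1, 2^k)$) and by $\bigO(\mul(\lceil nm/2^k\rceil))$ (using the cap $\deg_y \leq m-1$), splitting the sum at $k \approx \log_2 m$ and invoking Lemma~\ref{lem:geomseries} for the tail. Your treatment of the tail via $k_0$ and the ceiling estimates is slightly more explicit than the paper's, but it is the same argument.
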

\begin{proof}
    We analyze the time complexity of $\PowProj(n,m,\sum_{j=0}^{n-1}w_{n-1-j}x^j,1 - yg(x))$.
    Let $T_k$ represent the time complexity required for polynomial multiplications in the $k$-th iteration out of a total of $\lceil \log_2 n \rceil$ iterations in Algorithm~\ref{alg:PowProj}.
    Here, we consider the first iteration as the $0$-th iteration.
    By induction, we observe that both $\bideg P$ and $\bideg Q$ are at most $(\lceil n / 2^k \rceil - 1, 2^k)$ in the $k$-th iteration, which implies $T_k = \bigO(\mul(n))$.
    At the same time, we also have both $\bideg P$ and $\bideg Q$ are at most $(\lceil n / 2^k \rceil - 1, m - 1)$, which implies $T_k = \bigO(\mul(\lceil nm / 2^k \rceil))$.
    By combining these bounds with Lemma~\ref{lem:geomseries}, the total time complexity of all iterations is bounded as
    \begin{equation*}
        \begin{split}
            \sum_{k=0}^{\lceil \log_2 n \rceil - 1} T_k
            &= \bigO \left( \sum_{k=0}^{\lceil \log_2 m \rceil-1} \mul(n) \right) + \bigO \left( \sum_{k=\lceil \log_2 m \rceil}^{\lceil \log_2 n \rceil - 1} \mul(\lceil nm / 2^k \rceil) \right) \\
            &= \bigO(\mul(n) \log m) + \bigO(\mul(n)) \\
            &= \bigO(\mul(n) \log m).
        \end{split}
    \end{equation*}
    The computation in line \ref{line:PowProjRecip} can be performed in $\bigO(\mul(m))$ operations by Theorem~\ref{thm:recip}.
    Other parts of the algorithm have a negligible contribution to the overall time complexity.
    Consequently, the total time complexity is $\bigO(\mul(n) \log m + \mul(m))$.
\end{proof}

\subsection{Algorithm for Composition} \label{sec:Comp}

The algorithm presented in this section is essentially the transposition of the algorithm presented in Section~\ref{sec:PowProj}.
However, to enhance comprehension, we present a derivation that does not rely on the transposition principle.
To this end, we rewrite the objective $f(g(x)) \bmod x^n$.

Let $P(y) \coloneqq y^{m-1} f(y^{-1}), Q(x,y) \coloneqq 1 - yg(x)$.
Then, we observe that
\begin{equation*}
    \begin{split}
        f(g(x)) \bmod x^n
        &= \sum_{i=0}^{m-1} ([y^i]f(y)) g(x)^i \bmod x^n \\
        &= [y^{m-1}] \left( y^{m-1} f(y^{-1}) \sum_{i=0}^{\infty} y^i g(x)^i \right) \bmod x^n\\
        &= [y^{m-1}] \frac{y^{m-1} f(y^{-1})}{1 - yg(x)} \bmod x^n\\
        &= [y^{m-1}] \frac{P(y)}{Q(x, y)} \bmod x^n.
    \end{split}
\end{equation*}
Let us define a operator $\extract$ by
\begin{equation*}
    \extract_{l,r}\left(\sum_{i \geq 0} a_i(x) y^i\right) \coloneqq \sum_{i=l}^{r-1} a_{i}(x) y^{i-l}.
\end{equation*}
Then our objective is to compute $\extract_{m-1,m}(P(y)/Q(x,y)) \bmod x^n$.
For algorithmic convenience, we generalize our aim to computing $\extract_{d,m}(P(y)/Q(x,y)) \bmod x^n$ for a given $d$.
If $n = 1$, then $\extract_{d,m}(P(y)/Q(x,y)) \bmod x^n = \extract_{d,m}(P(y)/Q(0,y))$.
If $n > 1$, similar to the transformation described in Section~\ref{sec:PowProj}, we can derive that
\begin{equation*}
    \begin{split}
        \extract_{d,m}\left(\frac{P(y)}{Q(x,y)}\right) \bmod x^n
        &= \extract_{d,m}\left(\frac{P(y)Q(-x,y)}{Q(x,y)Q(-x,y) \bmod x^n \bmod y^m}\right) \bmod x^n \\
        &= \extract_{d,m}\left(\frac{P(y)}{V(x^2,y)}Q(-x,y)\right) \bmod x^n
        \eqqcolon u,
    \end{split}
\end{equation*}
where $V(x^2,y) = Q(x,y)Q(-x,y) \bmod x^n \bmod y^m$.
Let $e \coloneqq \max \{0, d - \deg_y Q(x,y)\}$.
Then, we can discard terms of order less than $e$ with respect to $y$ among $P(y)/V(x^2,y)$ to compute $u$, because they has no contribution to term with order $d$ or higher with respect to $y$ when multiplied by $Q(-x,y)$.
Therefore, $u$ can be rewritten as
\begin{equation*}
    \begin{split}
        u
        &= \extract_{d,m} \left(\left(\extract_{e,m}\left(\frac{P(y)}{V(x^2,y)}\right) y^e \right) Q(-x,y) \right) \bmod x^n \\
        &= \extract_{d-e,m-e} \left(\extract_{e,m}\left(\frac{P(y)}{V(x^2,y)}\right) Q(-x,y) \right) \bmod x^n \\
        &= \extract_{d-e,m-e} \left(\left( \extract_{e,m}\left(\frac{P(y)}{V(x^2,y)}\right) \bmod x^n \right) Q(-x,y) \right) \bmod x^n \\
        &= \extract_{d-e,m-e} \left(\eval{ \left( \extract_{e,m}\left(\frac{P(y)}{V(z,y)}\right) \bmod z^{\lceil n/2 \rceil} \right) }_{z=x^2} Q(-x,y) \right) \bmod x^n.
    \end{split}
\end{equation*}
Now the problem is reduced to a case where $n$ is halved.
We repeat this reduction until $n$ becomes $1$.
The resulting algorithm is summarized in Algorithm \ref{alg:Comp}.

\begin{algorithm}
    \caption{
        ($\Comp$)\\
        \textbf{Input}: $n$, $d$, $m$, $P(y) \in \ring[y]$, $Q(x,y) \in \ring[x,y]$ \\
        \textbf{Output}: $\extract_{d,m}(P(y) / Q(x,y)) \bmod x^n$
    }
    \label{alg:Comp}
    \begin{algorithmic}[1]
        \Require $[x^0 y^0] Q(x,y) = 1$
        \If {$n = 1$}
            \State $C(y) \gets (P(y) / Q(0,y)) \bmod y^m$ \label{line:CompRecip}
            \State \Return $\sum_{i=d}^{m-1} y^{i-d} [t^i] C(t)$
        \Else
            \State $A(x,y) \gets Q(x,y)Q(-x,y) \bmod x^n \bmod y^m$ \label{line:CompMul1}
            \State $V(x,y) \gets \sum_{i=0}^{\lceil n/2 \rceil - 1} x^i [s^{2i}] A(s,y)$
            \State $e \gets \max \{ 0, d - \deg_y Q(x,y) \}$
            \State $W(x,y) \gets \Comp(\lceil n/2 \rceil,e,m,P(y),V(x,y))$
            \State $B(x,y) \gets W(x^2,y)Q(-x,y)$ \label{line:CompMul2}
            \State \Return $\sum_{i=d-e}^{m-e-1} y^{i-(d-e)} [t^i] B(x,t) \bmod x^n$
        \EndIf
    \end{algorithmic}
\end{algorithm}

\begin{proof}[Proof of Theorem \ref{thm:main}]
    We analyze the time complexity of $\Comp(n,m-1,m,\ y^{m-1} f(y^{-1}),1 - yg(x))$.
    Let $T_k$ represent the time complexity required for line~\ref{line:CompMul1} and line~\ref{line:CompMul2} in the $k$-th invocation out of a total of $\lceil \log_2 n \rceil$ recursive invocations of Algorithm~\ref{alg:Comp}.
    Here, we consider the first invocation as the $0$-th invocation.
    By induction, we observe that $\bideg Q \leq (\lceil n / 2^k \rceil - 1, 2^k)$ and $\bideg W \leq (\lceil n / 2^{k+1} \rceil - 1, 2^{k+1})$ in the $k$-th invocation, which implies $T_k = \bigO(\mul(n))$.
    At the same time, we also have $\bideg Q \leq (\lceil n / 2^k \rceil - 1, m - 1)$ and $\bideg W \leq (\lceil n / 2^{k+1} \rceil - 1, m - 1)$, which implies $T_k = \bigO(\mul(\lceil nm / 2^k \rceil))$.
    By combining these bounds with Lemma~\ref{lem:geomseries}, the total time complexity of all invocations is bounded as
    \begin{equation*}
        \begin{split}
            \sum_{k=0}^{\lceil \log_2 n \rceil - 1} T_k
            &= \bigO \left( \sum_{k=0}^{\lceil \log_2 m \rceil-1} \mul(n) \right) + \bigO \left( \sum_{k=\lceil \log_2 m \rceil}^{\lceil \log_2 n \rceil - 1} \mul(\lceil nm / 2^k \rceil) \right) \\
            &= \bigO(\mul(n) \log m) + \bigO(\mul(n)) \\
            &= \bigO(\mul(n) \log m).
        \end{split}
    \end{equation*}
    The computation in line \ref{line:CompRecip} can be performed in $\bigO(\mul(m))$ operations by Theorem~\ref{thm:recip}.
    Other parts of the algorithm have a negligible contribution to the overall time complexity.
    Consequently, the total time complexity is $\bigO(\mul(n) \log m + \mul(m))$.
\end{proof}

\section*{Acknowledgements}

We extend our gratitude to Josh Alman and Hanna Sumita for their review and revisions of our paper. We also appreciate the helpful suggestions provided by the anonymous reviewers.

\printbibliography

\end{document}